
\documentclass[times, 10pt,twocolumn]{article}
\pdfoutput=1
\usepackage{latex8}
\usepackage{times}
\usepackage{listings}
\usepackage{longtable}
\usepackage{latexsym}
\usepackage{clrscode}
\usepackage{graphicx}
\usepackage{subfigure}

\usepackage[latin1]{inputenc}


\newtheorem{corollary}{Corollary}
\newtheorem{theorem}{Theorem}

\newtheorem{definition}{Definition}

\newcommand{\qed}{\hfill\leavevmode
  \hbox to.77778em{%
  \hfil\vrule
  \vbox to.675em{\hrule width.6em\vfil\hrule}%
  \vrule\hfil}}
\DeclareRobustCommand{\textsquare}{%
  \begingroup \usefont{U}{msa}{m}{n}\thr@@\endgroup
}

\newenvironment{proof}[1][\proofname]{\par
  \normalfont
  \trivlist
  \item[\hskip\labelsep
        \itshape
    #1.]
}{%
  \qed\endtrivlist
}
\providecommand{\proofname}{Proof}

\pagestyle{empty}

\begin{document}

\title{Broadcasting in Prefix Space:\\ P2P Data Dissemination with Predictable Performance\thanks{This work is supported in part by the German Bundesministerium für Bildung und Forschung within the project  \emph{Moviecast} (http://moviecast.realmv6.org).
}
}

\author{Matthias~W{\"a}hlisch\protect\thanks{The author is also with HAW
Hamburg, Dept. Informatik, and with link-lab, Berlin.}\\ 
Freie Universit{\"at} Berlin\\ Institut f{\"u}r Informatik\\
Takustr. 9\\ D-14195 Berlin, Germany\\
waehlisch@ieee.org\\
\and Thomas~C.~Schmidt\\
HAW Hamburg\\ Dept. Informatik\\
Berliner Tor 7\\ 
D-20099 Hamburg, Germany\\
 t.schmidt@ieee.org\\
\and Georg Wittenburg\\
Freie Universit{\"at} Berlin\\ Institut f{\"u}r Informatik\\
Takustr. 9\\ D-14195 Berlin, Germany\\
wittenbu@inf.fu-berlin.de\\
}

\maketitle
\thispagestyle{empty}

\begin{abstract}
A broadcast mode may augment peer-to-peer overlay networks with an efficient, scalable data replication function, but may also give rise to a virtual link layer in VPN-type solutions. We introduce a generic, simple broadcasting mechanism that operates in the prefix space of distributed hash tables without signaling. This paper concentrates on the performance analysis of the prefix flooding scheme. Starting from simple models of recursive $k$-ary trees, we analytically derive distributions of hop counts and the replication load. Further on, extensive simulation results are presented based on an implementation within the OverSim framework.  Comparisons are drawn to Scribe, taken as a general reference model for group communication according to the shared, rendezvous-point-centered distribution paradigm. The prefix flooding scheme thereby confirmed its widely predictable performance and consistently outperformed Scribe in all metrics. Reverse path selection in overlays is identified as a major cause of performance degradation.
\end{abstract}

{\bf Keywords:} Prefix flooding, DHT, random recursive $k$-ary trees,
overlay network simulation, Pastry, Scribe

\Section{Introduction}\label{sec:broadcast:introduction}
A broadcast service is commonly supported on the network and data link
layer. Analog to the IP layer, application overlays may require the use of
an unselective group communication. Distributed Hash Tables (DHT) like
Chord \cite{smkkb-cspls-01} and Pastry \cite{rd-psdol-01} do not consider
broadcast, i.e., a mechanism to communicate to all parties of one DHT
instance without their active participation.

The broadcast mode admits two unique features. The a
priori awareness of the data flooding task may significantly enhance
efficiency, e.g., by taking advantage of network or (shared) media
specifics. Further on, it enables a message exchange among mutually unknown
parties without a requirement of specific service awareness or any form of
signaling. Broadcast is thus the fundamental mechanism for unselective
data synchronization and for the autonomous coordination of distributed
systems.

On the application layer, there are likewise versatile use cases for
broadcast communication. Applications range from broadband data
dissemination in video conferencing or data replication, over 
service and peer discovery up to the implementation of a virtual link layer
in VPN-type solutions.

Broadcast is a special case of multicast. This distribution mechanism
guarantees to reach not only a subset, but all nodes of a
dedicated domain without explicit registration. The set of \emph{all nodes}
is also called the broadcast domain. It is worth noting that a broadcast
domain can be arranged on different layers with varying inherent
capabilities. Connecting nodes, e.g., with an Ethernet hub to a shared
segment facilitates packet distribution based on the physical network structure. It is limited by the supporting medium, i.e., the range of signal propagation.  The equivalent holds for the wireless domain, where
the medium is always shared, but of restrictive propagation ranges.
Participating nodes do not need a specific network logic in sending and
receiving broadcast data on the physical layer. Broadcast support, however,
on a dedicated layer should be independent of the underlying tier, which
may accelerate it. In the example of IP, broadcast addresses will be
directly mapped to the Ethernet broadcast address, such that all Ethernet
hosts of one segment receive the data independent of their subnet
assignment, but in contrast to network access, packets can be forwarded on
the network layer beyond physical bounds.

In general, broadcast in logical networks can be enabled by passing data
incrementally to direct overlay neighbors. If the graph of nodes is
connected and contains the source, all nodes will be reached. DHT
structures allow to derive such a connected neighborhood graph. Any node
can send packets to an address adjacent to its own key space. In contrast
to IP, every possible address is associated with one overlay peer. Such a
simple ring broadcast scheme sends the packet to exactly one neighbor, reaching all $n$ DHT peers after $n$ hops.
As an alternative approach to the case of unknown neighborhoods, a
dedicated, well-known replicator can be placed in the network like the
Broadcast and Unknown Server in ATM. Such a rendezvous point-based approach
requires extra signaling to register receivers. The parallelism of
distribution is bounded by the replicator, which sustains the overall
duplication load and may be a single point of failure.

In the following, we will present a general broadcast algorithm along with
optimizations for Pastry, that uses the DHT structure more efficiently and
replicates data stepwise to all neighbors in prefix space. This scheme
works without peer involvement, especially without signaling.  We model and analyze the approach theoretically and in simulation, drawing comparison to a generic rendezvous point approach derived from Scribe \cite{cdkr-slsda-02}.

This paper at first gives an introduction of the prefix flooding algorithm in the next section and continues  as follows. Section \ref{sec:bcast-performance} presents an overview of the performance measures applied in our analysis, while analytical models are utilized in section \ref{sec:bcast-analytics} to derive distributions for the core properties of replication load and hop count. Results of our simulation studies are outlined in the subsequent section \ref{sec:bcast-simresults}. Related work is reviewed in section \ref{sec:rel-work},  followed by a final discussion and conclusion  in section \ref{sec:disc-concl}.

\Section{Broadcast by Prefix Flooding}
For an efficient application layer broadcast we need to define a strategy
for data replication on the overlay. In a DHT, the peer identifiers are
composed using an alphabet of $k$ digits and have a predefined length. All
nodes of a structured overlay can be naturally arranged in a prefix tree,
branching recursively at longest common prefix of $k$ neighboring vertices.
The leaves are labeled with the overlay identifiers of the DHT members and
the inner vertices represent the shared prefix (cf. figure
\ref{fig:prefixsubgraph}).

This tree can be interpreted as a distribution tree, defining the broadcast
domain of a specific DHT instance. If a broadcast packet is sent starting
from the root of the tree towards the leaves, the packet will be replicated
where prefixes branch. Actually, the broadcast domain (prefix tree)
decomposes in many smaller broadcast sub-domains (subtrees), in which the
propagations continue in parallel. Following the nature of broadcast, a
packet will be forwarded locally, after it has arrived at a root of a
subtree.

This approach allows to reach all peers of a DHT, because the data is
flooded to the leaves, which represent the overlay nodes. A peer receiving
a broadcast is required to determine the current branching position on the
distribution tree to decide on further packet replication. This context
awareness can be achieved by sending broadcast packets carrying the prefix
currently addressed, which we call \emph{destination prefix}.  This
destination prefix will grow in length with every forwarding hop while
descending the tree. 

We denote the length of a prefix ${\cal A}$ by $|{\cal
A}|$. Given two prefixes ${\cal A}$ and ${\cal B}$, the longest common
prefix will be written ${\cal L} = LCP({\cal A}, {\cal B})$. The relation
of ${\cal L}$ being a prefix of ${\cal A}$ is written as ${\cal L}
\subseteq {\cal A}$. Consequently ${\cal L} \subseteq {\cal A}$ and ${\cal
A} \subseteq {\cal L}$ if and only if ${\cal L} = {\cal A}$.


\begin{figure}
  \center
  \includegraphics[width=0.8\columnwidth]{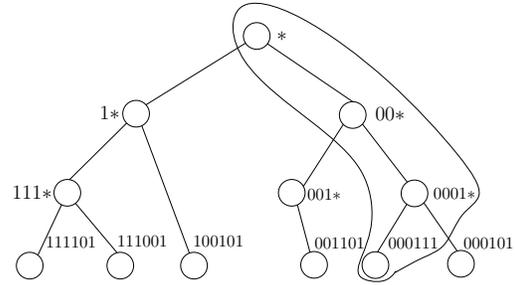}
  \caption{DHT Node  within a Prefix Tree --  \newline Associated Vertices
   are Highlighted.}\label{fig:prefixsubgraph}
\end{figure}

A proper specification for data distribution, i.e., a routing procedure on prefix trees, requires further definitions. The two sub-problems that need to be solved are \emph{a routing to a prefix} and the \emph{association of nodes with
prefixes}:
\begin{definition}
A prefix ${\cal L}$ is associated with an overlay node of ID ${\cal
N}$, if and only if ${\cal L} \subseteq {\cal N}$.
\end{definition}
As shown in figure \ref{fig:prefixsubgraph}, all inner vertices on the
shortest path from the root to a node are associated with that node.

Concordantly, a \emph{prefix routing} can be defined as forwarding a packet
to the node the destination prefix is associated with. In general, there
may be several nodes owning an associated prefix, since prefix-to-node
mapping is only assured to be unique for prefixes of full key length.  For
flooding a prefix tree, a forwarding peer needs to route packets to all
'live' neighboring prefixes (cf. figure \ref{fig:prefixsubgraph}).
Consequently, a peer must store corresponding nodes for each prefix
adjacent to its associated vertices in a prefix neighbor set. It is
important that these tables are complete. A \emph{complete neighbor set}
meets the following condition: Whenever an overlay node exists for a given
prefix, then the neighbor set will provide an entry for this prefix. In
particular it follows that each overlay node is a destination in at least
one set, since node keys are uniquely assigned. It is worth noting that a
prefix needs not to be included in any neighbor set, if there is no peer
sharing it. The requirement of complete neighbor tables will usually be
fulfilled by the key-based routing service, i.e., underlying DHT
routing maintenance.

A source initiates a broadcast by starting with the empty destination
prefix. This corresponds to delivering the data to all prefix neighbors
${\cal N}_i$. At each neighbor a packet will be further replicated. The
destination prefix is replaced with the new target address. In detail, the
algorithm works as follows:

\begin{codebox}
  \Procname{\proc{Prefix Flooding}}
  \zi \Comment On arrival of a packet with destination prefix $\cal C$ 
  \zi \Comment at a DHT node 
  \li \For all ${\cal N}_{i}$ IDs in prefix neighbor set
  \li   \Do \If ($LCP({\cal C},{\cal N}_{i}) = {\cal C}$) \>\>\>\>\>\>\> \Comment ${\cal N}_{i}$  downtree neighbor
  \li       \Then ${\cal C}_{new} \gets {\cal N}_{i}$
  \li          \proc{Forward packet to ${\cal C}_{new}$} 
  \End
\end{codebox}

If an inner vertex of the prefix tree fails, e.g., due to churn, the
corresponding sub-tree is empty or includes further peers. The replacement
of the next hop for a given prefix ${\cal C}_{new}$ will be achieved by the
underlying DHT.  In general, in the case of overlay network failures the
reliability of prefix flooding relies directly on the deployed DHT
maintenance.


If all peers have a complete set of prefix neighbors, the scheme
guarantees that all overlay nodes will be accessed, no peer receives a
broadcast packet more than once and the algorithm terminates.

\begin{theorem}[Coverage]\label{thm:bcastcoverage}
If the prefix neighbor sets are complete at all nodes, then the
\proc{Prefix Flooding} assures packet distribution to all overlay nodes.
\end{theorem}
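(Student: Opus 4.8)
The plan is to prove the statement by induction on the length of the destination prefix carried by a packet, exploiting the fact that the broadcast domain is literally a tree: there is a unique simple path from the root prefix $\epsilon$ to every leaf (overlay ID), and the destination prefix in a packet records how far down such a path the packet currently sits. First I would fix terminology: call a prefix ${\cal C}$ \emph{populated} if ${\cal C} \subseteq {\cal N}$ for at least one overlay node ${\cal N}$ — equivalently, the subtree rooted at ${\cal C}$ is non-empty. The statement I would actually prove is: for every populated prefix ${\cal C}$, \proc{Prefix Flooding} delivers the packet, carrying destination prefix (a refinement of) ${\cal C}$, to at least one node associated with ${\cal C}$. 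Since a full-length prefix is associated with exactly the one node owning that ID, instantiating ${\cal C} = {\cal N}$ for an arbitrary overlay node ${\cal N}$ then gives the theorem.

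For the base case, ${\cal C} = \epsilon$ is populated (any node witnesses it) and is associated with every node, in particular with the source, which holds the packet by the definition of the initiation step. For the inductive step, assume a node $v$ associated with a populated prefix ${\cal C}$ has received the packet with destination prefix ${\cal C}$; let ${\cal C}'$ be any populated prefix with $|{\cal C}'| = |{\cal C}| + 1$ and ${\cal C} \subseteq {\cal C}'$, i.e. a non-empty child subtree of ${\cal C}$. Because ${\cal C}$ is one of $v$'s associated vertices, ${\cal C}'$ is adjacent to an associated vertex of $v$, so completeness of $v$'s prefix neighbor set yields an entry ${\cal N}_i$ with ${\cal C}' \subseteq {\cal N}_i$; this entry satisfies $LCP({\cal C}, {\cal N}_i) = {\cal C}$, so the loop in \proc{Prefix Flooding} forwards the packet with a new destination prefix refining ${\cal C}$ toward a node associated with ${\cal C}'$. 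Every populated child of ${\cal C}$ is reached this way, and empty child subtrees need no coverage, which closes the induction. I would also record termination separately: the destination prefix strictly grows with every forwarding hop and is bounded by the fixed key length, so every forwarding chain is finite.

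The main obstacle I anticipate is the ambiguity of the phrase ``the node a prefix is associated with'': for prefixes shorter than full key length several nodes qualify, and which one the underlying key-based routing actually hands the packet to is not pinned down by the algorithm. The argument must therefore be phrased so as to be insensitive to that choice — it is enough that \emph{some} associated node receives the packet, and any such node, having ${\cal C}'$ among its associated vertices, inherits (by completeness) a full set of neighbor entries for every populated grandchild of ${\cal C}$, which is exactly the hypothesis the next induction step consumes. A secondary point that needs careful handling is the precise semantics of the initiation step with the empty destination prefix and the exact shape of the prefix neighbor set, so that ``adjacent to an associated vertex'' matches what the completeness property literally guarantees; once those are aligned, the induction is routine.

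I would finally remark that the same tree structure also yields the ``no duplicate'' and ``termination'' claims stated informally before the theorem — each leaf is reached along the unique root-to-leaf prefix chain — but, since the theorem asks only for coverage, the induction above suffices and these refinements can be stated as corollaries.
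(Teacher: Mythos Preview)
Your approach is sound: inducting on the depth of the destination prefix, with the invariant that every populated prefix ${\cal C}$ is reached at some associated node, does establish coverage. The paper, however, records a different route --- induction over the \emph{number of overlay nodes} rather than over prefix depth --- with the details deferred to an external reference. In a node-count induction one argues that inserting one more peer into an overlay already enjoying coverage preserves it, which requires tracking how prefix neighbor sets change under insertion; your depth-based induction instead follows the packet operationally down the tree and never reasons about overlay growth. The latter mirrors the algorithm's own recursion and is arguably more direct, while the paper's approach has the structural flavour of building the overlay incrementally and yields correctness uniformly for all sources at once.

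One technical point deserves explicit treatment in your write-up. Among the children ${\cal C}'$ of ${\cal C}$, exactly one lies on $v$'s own root-to-leaf path and is therefore an \emph{associated} vertex of $v$, not a prefix \emph{adjacent} to one; for that child the completeness hypothesis supplies no neighbor entry and no forwarding takes place. The induction still closes because $v$ itself is a node associated with that ${\cal C}'$ and already holds the packet --- but the destination prefix it carries is ${\cal C}$, not ${\cal C}'$. Your invariant should therefore be relaxed to ``receives the packet with destination prefix some \emph{prefix of} ${\cal C}$'' rather than ``exactly ${\cal C}$'' (equivalently, phrase it as $v$ \emph{processing} level $|{\cal C}|$, since the \proc{Prefix Flooding} loop at $v$ ranges over all downtree neighbors, not just those at depth $|{\cal C}|+1$). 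With that adjustment the on-path child is immediate, the off-path children go through exactly as you wrote, and the remainder of your plan --- including the termination remark via strictly growing prefix length --- is fine.
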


\begin{theorem}[Uniqueness]\label{thm:bcastduplications}
Each overlay node will receive a broadcast packet at most once using the
\proc{Prefix Flooding}.
\end{theorem}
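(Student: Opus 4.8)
The plan is to treat the destination prefix carried by a packet as a coordinate in the prefix tree and to exploit that the forwarding rule only ever moves this coordinate \emph{downward}. Concretely, I would first record two invariants, each by a one-line induction on the number of forwarding hops: (i) a packet whose destination prefix is ${\cal C}$ is only delivered to a node ${\cal N}$ with ${\cal C}\subseteq{\cal N}$ (this is exactly the meaning of ``the node ${\cal C}$ is associated with''); and (ii) each forwarding step replaces ${\cal C}$ by ${\cal C}_{new}$ with ${\cal C}\subseteq{\cal C}_{new}$ and $|{\cal C}_{new}|>|{\cal C}|$ --- forced by the test $LCP({\cal C},{\cal N}_i)={\cal C}$, since the chosen neighbor sits in ${\cal C}$'s subtree but on a sibling branch of one of the forwarding node's associated vertices, so the branch point ${\cal C}_{new}$ lies strictly below ${\cal C}$. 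Consequently the distinct prefixes a single node emits from one incoming prefix are pairwise $\subseteq$-incomparable (two siblings of prefixes of that node either disagree at a shared position or the shorter one disagrees with the longer one at its own last digit), so the subtrees they address, and the sub-floods poured into them, are disjoint.

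The linchpin is then the following lemma, which I would prove by strong induction on prefix length: \emph{no prefix of the source's own identifier is ever the destination prefix of a forwarded (hence delivered) packet}. The empty prefix is only the source's initial state and is never itself forwarded; and a prefix ${\cal P}$ of the source of length $\ell\ge 1$, if it were generated as some ${\cal C}_{new}$, would be generated by a node receiving a destination prefix ${\cal C}\subsetneq{\cal P}$, which is again a prefix of the source of length $<\ell$ --- never delivered by the induction hypothesis, and not equal to the empty initial prefix because a prefix of the source is an associated, not an adjacent, vertex of the source. Two immediate consequences: the source is never a delivery target, and any packet ever delivered inside the subtree rooted at one of the source's adjacent prefixes $P$ must carry a destination prefix that extends $P$.

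With the lemma in hand I would finish by induction on the number $n$ of nodes in the domain. For $n=1$ nothing is forwarded. For $n>1$, the source forwards exactly one packet toward each of its live adjacent prefixes $P^{(1)},\ldots,P^{(t)}$; these are pairwise incomparable and their subtrees $T^{(1)},\ldots,T^{(t)}$ partition all nodes other than the source. By the lemma, each $P^{(i)}$ is generated \emph{only} by the source, so the packet addressed to $P^{(i)}$ is delivered to a single node $r_i\in T^{(i)}$; again by the lemma together with the incomparability/self-similarity of subtrees, no further packet ever enters $T^{(i)}$ from outside and none leaves it, so the activity inside $T^{(i)}$ is literally \proc{Prefix Flooding} run with source $r_i$ on the smaller overlay $T^{(i)}$. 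The induction hypothesis gives that each node of $T^{(i)}$ is hit at most once; combined with the source being hit zero times and the $T^{(i)}$ being disjoint, every node receives the packet at most once, which is the claim.

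I expect the genuine difficulty to be exactly what the lemma is fighting: prefix-to-node association is \emph{not} injective for prefixes shorter than the full key length, so ``forward the packet to ${\cal C}_{new}$'' need not resolve to a predetermined node, and a naive argument (two distinct deliveries to ${\cal N}$ would carry incomparable prefixes both prefixing ${\cal N}$, contradiction) only disposes of deliveries with \emph{distinct} destination prefixes. The residual case --- two deliveries carrying the \emph{same} prefix --- is what forces the structural induction, and the point where the proof needs care is justifying that each maximal subtree is entered exactly once and acquires a single well-defined local source; the source-prefix lemma is precisely the device that makes this go through. One should also state up front that ``receiving a broadcast packet'' means being the endpoint that re-runs the procedure, not a transient key-based-routing hop, so that invariant (i) is all that is needed.
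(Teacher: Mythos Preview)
Your argument is correct, and in fact considerably more detailed than what the paper itself supplies.  The paper does not prove Theorem~\ref{thm:bcastduplications} in the text; it defers the full proof to~\cite{w-sagcb-08} and records only the one-line observation that ``each routing prefix uniquely identifies the root of a subtree in prefix space.''  Read charitably, that hint points to a direct argument: the destination prefixes generated during a run are pairwise distinct subtree labels, and since a packet is always delivered into the subtree named by its prefix, disjointness of subtrees yields uniqueness.  Your write-up takes a different route --- a structural induction on the number of overlay nodes, driven by the lemma that no prefix of the source's own identifier is ever issued as a destination prefix --- which is in spirit the decomposition the paper reserves for the \emph{coverage} theorem rather than for uniqueness.

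What your approach buys is precisely the robustness you flag at the end: the paper's one-line observation leaves unexamined the case of two forwarders independently emitting the \emph{same} destination prefix (which is not ruled out a priori, since short prefixes are associated with many nodes and the routing resolution need not be globally consistent).  Your source-prefix lemma closes that gap cleanly, by certifying that each top-level adjacent prefix $P^{(i)}$ is produced exactly once --- by the source --- so that each subtree $T^{(i)}$ acquires a single well-defined local source and the self-similar recursion goes through.  Two small remarks: the claim that the $T^{(i)}$ \emph{partition} the non-source nodes uses completeness of the neighbor sets, which the theorem statement does not assume; for uniqueness you only need that the $T^{(i)}$ are pairwise disjoint, which holds regardless.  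And your invariant~(ii) should be read against the intended semantics (as in the Pastry instantiation, where the carried datum is a prefix \emph{length}), since the generic pseudocode's literal assignment $\mathcal{C}_{new}\gets\mathcal{N}_i$ to a full node ID would otherwise short-circuit the recursion.
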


Complete proofs for both theorems are elaborated in \cite{w-sagcb-08}. Theorem  \ref{thm:bcastcoverage} can be proven by induction over the number of overlay nodes, while theorem \ref{thm:bcastduplications} follows from the observation that each routing prefix uniquely identifies the root of a subtree in prefix space.

From theorem \ref{thm:bcastduplications} it can be concluded that the \proc{Prefix
Flooding} does not induce loops, proving the assumption that the algorithm
terminates.

\SubSection{Implementation for Pastry}

The idea of prefix routing is implemented in Pastry. The Pastry routing
table of a peer reflects directly the elements of a prefix tree. Thus each peer carries a subset of the prefix tree in its routing
table.  Merging the routing tables of all peers, would form the global 
distribution tree. 
In flooding their routing tables, Pastry peers flood the prefix
tree, which corresponds to the overlay broadcast described by the
\proc{Prefix Flooding}. In detail, the idea is as follows: A source sends
its data to all routing table entries.  Each destination prefix corresponds to the root of a broadcast sub-domain. The receiving peers determine their
position in the tree, i.e., the height $D$ in the prefix tree, at which
they receive the data, and forward the packets downwards. This is equal to
sending data to all routing table entries starting at row $D+1$. Note that
the tree position can easily be derived by denoting the row number, which
reduces the packet size in contrast to encoding the entire key. For Pastry
the \proc{Prefix Flooding} reads in pseudo code:


\begin{codebox}
  \Procname{\proc{Pastry Prefix Flooding}}
  \zi \Comment On arrival of a packet with destination prefix length 
  \zi \Comment  $D$ at Pastry node of ID $\cal K$ with routing table $A$ 
  \zi \Comment containing $l$ rows and $k$ columns
  \li \For all $i \gets {D} + 1$ \To $l$
  \li   \Do \For all $j\gets 1$ \To $k$
  \li     \Do \If $a_{i,j} \neq Unspecified \wedge a_{i,j} \neq \cal{K}$
  \li       \Then $D_{new} \gets i$
  \li          \proc{Forward Packet To $a_{i,j}$}
        \End \End \End
\end{codebox}

If the routing table is filled correctly, all theorems for the \proc{Prefix
Flooding} are also valid for Pastry, since the Pastry routing table
corresponds to the set of prefix neighbors $\{{\cal N}_i\}$. However,
Pastry reactive maintenance does not guarantee that each overlay node will
provide complete routing states \cite{rd-psdol-01}, which conflicts with
the \proc{Prefix Flooding}. Therefore we augmented Pastry  with a proactive
routing maintenance mechanism, which performs initial key look-ups to fill
the routing table  similar to the ``fix\_fingers'' routine in Chord.

\Section{Performance Measures}\label{sec:bcast-performance}
The prefix flooding approach to broadcasting introduces prefix trees as a
control plane to packet forwarding. This simple mechanism operates without
additional signaling, which is an apparent advantage. The quality of the
routing as inherited from a hash-generated prefix tree needs closer
inspection. Ideally, packet distribution should be fast and minimize
traffic and replication load in the network.  To obtain an overall insight
into the routing quality, we evaluate the prefix flooding scheme in theory
and in a discrete event simulation according to the following metrics and
compare our results to Scribe \cite{cdkr-slsda-02}. Scribe serves as a
generic reference model for schemes using dedicated replicators, and is
based on the same DHT, Pastry. It is worth noting that the performance
metrics do not measure the multicast specific properties of Scribe.  Thus,
choosing Scribe for comparison is reasonable.

  {\bf\em Packet replication load} quantifies the number of packets a single
  peer has to forward. This metric reflects the number of direct neighbors
  per node in the distribution tree. The overall characteristic for the
  prefix routing is then given by the distribution of the replication load
  obtained from all forwarding nodes.

  {\bf\em Hop count} counts the number of overlay routing traversals that a
  packet needs on its way from the source to the destination. Note that
  the hop count affects the travel time, because every additional hop
  results directly in an additional transmission time. In this sense the
  travel time is correlated with the hop count.

  {\bf\em Travel time} describes the time a data packet travels from the
  source until it reaches a receiver measured in seconds. This absolute
  value depends on the one hand on the number on hops between the nodes and
  on the other hand of the transmission time inherited from the hop by hop
  link delays and the packet size of the transmitted data.

  {\bf\em Relative delay penalty} measures the ratio of the travel time for
  data packets delivered via Scribe and the travel time resulting from the
  prefix flooding scheme. This relative factor gives an indication of the
  parallelism of packet forwarding.

\Section{Analytical Models}
\label{sec:bcast-analytics}

To understand the performance of the prefix flooding scheme, we first
present analytical considerations. Based on the shape of the prefix tree,
we gain insight in the structural behavior of protocols for traversing
prefix distribution trees. As this analysis is only based on the tree
itself, fringe effects known from simulations are isolated. 

\SubSection{Replication Load} 
In the following, we want to derive the distribution of the replication load
in a prefix tree.
For the general case of prefix flooding in a structured overlay of $N$
nodes using a prefix alphabet of $k$ digits, the following upper bound of
the replication load can be derived immediately.

\begin{theorem}
Any overlay node in a prefix flooding domain of $N$ receivers and an
alphabet with $k \geq 2$ digits  will replicate a data packet at most
$\log_{2}(N) (k-1)$ times.
\label{thm:repli-limit}
\end{theorem}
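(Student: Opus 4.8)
The plan is to bound the replication load of an arbitrary forwarding node $v$ directly from the forwarding loop, writing it as a product of a ``per row'' factor and a ``per column'' factor. Suppose $v$ receives the packet carrying destination prefix length $D$, so that at that moment $v$ sits at depth $D$ of the distribution tree. By the pseudo code of \proc{Pastry Prefix Flooding}, $v$ emits exactly one copy for each filled entry $a_{i,j}$ with $i \ge D+1$ that does not point back to $v$; hence its replication load is the number of such entries, and it suffices to bound how many rows can be involved and how many columns each such row can contribute.

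The column factor is immediate: row $i$ has $k$ columns, and the column whose index equals the $i$-th digit of $v$'s own key is occupied by $v$ itself (or left \textit{Unspecified}), so each single row triggers at most $k-1$ forwards. This is precisely where the hypothesis $k \ge 2$ enters, ensuring that $k-1 \ge 1$ is the genuine per-row branching degree and not a vacuous quantity. For the row factor I would use the correspondence between populated table rows and branch points of the prefix tree: row $i$ of $v$'s table is non-empty exactly when some other node shares the first $i-1$ digits of $v$'s key, i.e.\ when the depth-$(i-1)$ vertex on the root-to-$v$ path is an inner (branching) vertex of the tree --- a witness of a live sibling sub-domain. Hence the number of rows that can contribute for $v$ is at most the number of inner vertices on $v$'s path, which is at most the depth of $v$, which is at most the height of the whole prefix tree; restricting to rows $i \ge D+1$ can only discard rows.

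It then remains to estimate the height of a prefix tree carrying $N$ leaves. Since it branches by longest common prefix, every inner vertex splits off at least two non-empty sub-domains, and I would bound the height by $\log_2 N$ (the most conservative logarithm base). Multiplying the two factors yields a replication load of at most $\log_2(N)\,(k-1)$, uniformly in $v$ and in the reception depth $D$ --- the extremal situation being a leaf that lies deep in the tree yet was handed the packet near its top.

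I expect the height estimate to carry essentially all the weight of the argument. The bound $\log_2 N$ is the one realised by a balanced prefix tree, so the cleanest rigorous route is to run the ``at least two children'' counting at the level of leaf counts of sibling sub-domains rather than raw path length, or --- consistently with the hash-generated, predictable-performance regime targeted throughout this paper --- to appeal to the standard logarithmic height results for random recursive $k$-ary tries. The remaining bookkeeping (inner vertices that sit exactly at depth $D$, rows that are populated only by the self-entry, and the source's start from the empty prefix with $D=0$) is routine once the row/branch-point dictionary above is in place.
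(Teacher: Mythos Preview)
The paper does not actually supply a proof of this theorem: it merely states that the bound ``can be derived immediately'' and moves on. Your row--column decomposition --- at most $k-1$ forwards per routing-table row, at most as many rows as the height of the prefix tree, and height bounded by $\log_2 N$ --- is precisely the heuristic the authors have in mind, so in that sense you are aligned with the paper.

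Where you go further than the paper is in flagging the height estimate as the real content, and you are right to worry. Your first suggested repair, the ``at least two children'' counting, does \emph{not} rescue a worst-case bound: a caterpillar trie on the keys $0^m,\,10^{m-1},\,110^{m-2},\ldots,1^m$ has every internal vertex binary, $N=m+1$ leaves, yet the deepest leaf sits at depth $m=N-1$; if that leaf is the source it replicates $N-1$ times, which exceeds $\log_2(N)(k-1)$ already for $N=5$, $k=2$. So the theorem is not a deterministic worst-case statement at all, and no purely combinatorial argument on arbitrary prefix trees will close the gap. Your second instinct --- to read the bound in the hash-generated, random-trie regime the rest of the paper works in, where the height of a trie on $N$ uniform keys is $\Theta(\log N)$ with high probability --- is the correct resolution and is consistent with how the paper treats the companion hop-count bound in Theorem~\ref{thm:hop-limit}. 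In short: your decomposition is the intended one, and your caution about the height step is warranted; just drop the first proposed fix and commit to the probabilistic reading.
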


For the distribution function of the replication load in a fully populated prefix
tree, we need to determine replication values along with their frequencies.
Recalling the picture of a full prefix tree for an alphabet with $k$
digits, every node except the leaves has $k$ children. The number of packet
replications for an overlay peer is equal to the overall number of
forwarding neighbors, which depends on the tree position, where a peer
receives the packet. Per level the replication load is $k-1$. Consequently, in a fully populated $k$-ary prefix tree of height $h$, replication occurs only at multiples of $k-1$, the number of neighbors in prefix space. For $j \geq 0$ we denote these
discrete values by $v_{h,k}(j) = (h-j)(k-1)$.

\begin{figure}
  \center
  \includegraphics[width=0.8\columnwidth]{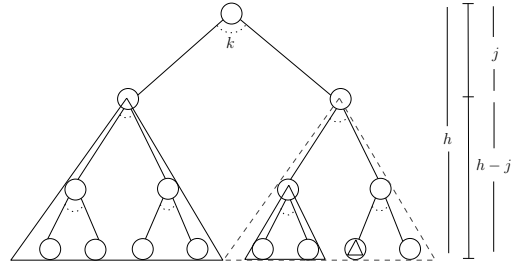}
  \caption{Self-Similarity of Prefix Subtrees due to the Recursive Nature of
  $k$-ary Trees}\label{fig:ktree-prop}
  \vspace*{-0.5cm}
\end{figure}

To derive the replication frequency, we quantify the occurrence of the
replication load $v_{h,k}(j)$. Since we know the load of a peer forwarding
packets at height $j$, the frequency can be calculated by counting the
number of peers that fulfill the replication condition. The latter
corresponds to the number of (sub-)trees with height $h-j$, because every
peer serves as forwarder for one tree. Starting at the source in a full
prefix tree, the structure decomposes in $k-1$ subtrees with height $h-1$,
$k(k-1)$ subtrees of height $h-2$, etc. (cf. figure \ref{fig:ktree-prop}).
At every level of the full prefix tree, there is an exponential growth in
the number of inner vertices representing the root of new subtrees. Thus,
the frequency of ($h-j$)-size subtrees must increase exponentially with
their decreasing height. In detail there are $k^{j-1}\cdot(k-1)$ subtrees
of height $h-j$, which account for a replication load of $(h-j)\cdot(k-1)$.

\begin{theorem}\label{thm:repli-freq}
Given a fully populated $k$-ary prefix tree of height $h$. Then the
frequency $f_{h,k}(v_{h,k}(j))$ for a replication load $v_{h,k}(j) = (h-j)(k-1)$ is
given by

\begin{equation}
f_{h,k}(v_{h,k}(j)) = \left\{ \begin{array}{ll}
                      1 & \textnormal{for } j=0 \\
                      k^{j-1}\cdot(k-1) & \textnormal{for } 0<j\leq h.
                    \end{array}
                    \right.
\label{eq:repli-freq}
\end{equation}
\end{theorem}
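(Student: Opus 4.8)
The plan is to formalize the counting argument that the surrounding text already sketches, proceeding by induction on the height $j$ of the subtrees being counted (equivalently, on depth from the root). The key object is: for a fixed $j$ with $0 < j \leq h$, how many inner vertices of the full $k$-ary tree of height $h$ are the root of a subtree of height exactly $h-j$ \emph{and} serve as a forwarding origin under \proc{Prefix Flooding}? First I would dispose of the case $j=0$: the source receives the packet with the empty destination prefix, it is the unique root of the whole tree of height $h$, hence $f_{h,k}(v_{h,k}(0)) = 1$. This is the induction base.

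For the inductive step, I would fix the level $d$ at depth $d = j$ from the root (using the convention that the root is at depth $0$ and has height $h$, so a vertex at depth $d$ roots a subtree of height $h-d$). I want to count the vertices at depth $j$ that actually act as replication points. Here the subtlety enters: a full $k$-ary tree has $k^j$ vertices at depth $j$, but not all of them originate a forwarding step — a peer forwards along a branch only after it has \emph{received} the packet along the longest-common-prefix edge, so exactly one of the $k$ children-branches at each inner node is the "incoming" branch and the remaining $k-1$ are "outgoing." Translating this: at depth $1$, the source hands the packet to $k-1$ neighbors, each rooting a subtree of height $h-1$; within each such subtree of height $h-1$, the same decomposition applies, yielding $k-1$ new roots per subtree at the next level down but also $k$ total children per inner vertex as the recursion fans out. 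Carrying this recursion one level deeper multiplies the count by $k$ (each of the previously counted roots has $k$ children, each of which roots a subtree of the appropriate smaller height and becomes a forwarding origin), except that the very first level only contributes the factor $(k-1)$ rather than $k(k-1)$. Unrolling gives $k^{j-1}(k-1)$ for $0 < j \leq h$, which is exactly \eqref{eq:repli-freq}.

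Concretely I would set $g(j)$ to be the claimed frequency and show $g(1) = k-1$ directly from the root's branching, then establish the recurrence $g(j+1) = k \cdot g(j)$ for $1 \leq j < h$ by the self-similarity of $k$-ary subtrees (figure~\ref{fig:ktree-prop}): each of the $g(j)$ subtree-roots at height $h-j$ is itself a full $k$-ary tree of height $h-j \geq 1$, so it contains $k$ inner-vertex children each rooting a subtree of height $h-(j+1)$ and each acting as a forwarder. Solving $g(j+1)=k\,g(j)$ with $g(1)=k-1$ yields $g(j) = k^{j-1}(k-1)$. A consistency check worth including: summing the received-packet counts, $\sum_{j=0}^{h} f_{h,k}(v_{h,k}(j)) = 1 + \sum_{j=1}^{h} k^{j-1}(k-1) = 1 + (k^h - 1) = k^h = N$, confirming that every one of the $N = k^h$ leaves is accounted for exactly once (consistent with Theorem~\ref{thm:bcastduplications}).

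The main obstacle is not the algebra — the geometric sum is routine — but pinning down the bookkeeping so that the "$k-1$ versus $k$" distinction is airtight: one must be careful that the incoming branch at each inner vertex is excluded from the forwarding set (this is the $LCP({\cal C},{\cal N}_i) = {\cal C}$ test in the pseudocode), which is precisely why the first level contributes $k-1$ and each subsequent level multiplies by $k$ rather than the count collapsing to $k^j$ or $(k-1)^j$. I would make this rigorous by explicitly identifying, for a vertex $u$ at depth $j \geq 1$, the unique ancestor edge through which $u$'s subtree is entered, and observing that $u$ is a forwarding root iff its parent forwarded to it, which happens for exactly $k-1$ of the $k$ children of each already-reached inner vertex — giving the recurrence cleanly.
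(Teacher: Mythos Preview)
Your overall strategy---induction driven by the self-similarity of full $k$-ary trees---is the same as the paper's, and your consistency check $\sum_j f_{h,k}(v_{h,k}(j))=k^h$ is a nice touch. However, the justification you give for the recurrence $g(j{+}1)=k\,g(j)$ does not hold, and ironically it is precisely the ``$k-1$ versus $k$'' bookkeeping you warn about in your final paragraph that bites you.

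You claim that each of the $g(j)$ subtrees of height $h-j$ (those entered by a level-$j$ receiver) contributes $k$ level-$(j{+}1)$ forwarders, one per child subtree. But the child subtree that contains the level-$j$ receiver does \emph{not} acquire a new forwarder; the level-$j$ receiver itself continues to act there. So each of your $g(j)$ subtrees contributes only $k-1$ new forwarders, yielding $(k-1)g(j)=(k-1)^2 k^{j-1}$, which undershoots the correct $g(j{+}1)=(k-1)k^j$ by exactly $(k-1)k^{j-1}$. Those missing forwarders come from the $k^{j-1}$ subtrees of height $h-j$ that already contain a receiver at some level strictly below $j$; each of these also spawns $k-1$ fresh level-$(j{+}1)$ forwarders. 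The clean count is therefore $g(j{+}1)=(k-1)\,S(j)$, where $S(j)=\sum_{i=0}^{j} g(i)=k^{j}$ is the number of \emph{all} reached subtrees of height $h-j$, not just the freshly reached ones. The paper sidesteps this trap by running the induction the other way, on the total tree height: a tree of height $m{+}1$ splits into $k$ subtrees of height $m$, and in each of them---including the one containing the source---flooding proceeds exactly as in a standalone tree of height $m$, so the frequency of any fixed replication value simply picks up a factor $k$ when the height grows by one. Either route closes the gap.
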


\begin{proof}[Proof by induction]
We assume a full $k$-ary prefix tree of height $h$. The case $j=0$
corresponds to the (single) source that replicates data to $h(k-1)$
neighbors as derived above.

The induction is done with respect to $h-j$, the height of a subtree (cf. fig. \ref{fig:ktree-prop}).

Base case: Is $h-j=1$, we have to show that the replication load
$v_{h,k}(h-1)$ appears ($k-1$)-times. In a tree of height 1, the source sends
the data to all further leaves directly, which equals  $k-1$.


Induction step: Assume the statement holds for $h-j$. We have to show that
the statement holds for $h-j+1$, i.e., 
$f_{h,k}(v_{h,k}(h-j+1)) = k^{h-j}(k-1)$. 

Consider a full prefix tree of height $h-j+1$. It 
consists of $k$ subtrees of height $h-j$. The replication load of a node in
a tree of $(h-j+1)$ equals the sum of all neighbors in these k subtrees.
Using the induction hypothesis the overall replication load reads \\ $k\cdot
f_{h,k}(v_{h,k}(h-j)) = kk^{h-j-1}(k-1) = k^{h-j}(k-1).$
\end{proof}

The overall number of packet replications is easily identified as the
number of leave nodes, since there are no packet duplications and each peer
receives the broadcast. The number of leaves of a full $k$-ary tree of
height $h$ equals $k^h$, such that we arrive at the following 

\begin{corollary}\label{cor:repli-dist}
The probability distribution $P_{h,k}$ for packet replication
multiplicities reads

\begin{equation}
P_{h,k} (v_{h,k}(j)) = \left\{ \begin{array}{ll}
                       k^{-h} & \mbox{for $j = 0$}\\
                        k^{j - h - 1} \cdot  (k - 1) & \mbox{for $1 \leq j \leq h$}\\
                       0 & \mbox{otherwise.}\\
                      \end{array}
              \right.
\label{eq:repli-dist}
\end{equation}
\end{corollary}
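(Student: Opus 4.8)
The plan is to obtain $P_{h,k}$ simply by normalising the frequency count of Theorem~\ref{thm:repli-freq} against the total number of forwarding peers, which in a fully populated $k$-ary prefix tree of height $h$ is the number of leaves, $k^h$. The reading of $P_{h,k}$ is the distribution of the replication load observed at a peer drawn uniformly at random from the broadcast domain, so that $P_{h,k}(v) = f_{h,k}(v)/k^{h}$ once it is argued that the $f_{h,k}$ bins account for every peer exactly once.

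First I would record that the overlay peers are precisely the leaves of the prefix tree, and a full $k$-ary tree of height $h$ has $k^{h}$ of them; since there are no duplicate deliveries and every peer is reached (Theorems~\ref{thm:bcastcoverage} and~\ref{thm:bcastduplications}), each of these $k^{h}$ peers receives the packet exactly once and therefore forwards at exactly one height level $j \in \{0,\dots,h\}$ with load $v_{h,k}(j) = (h-j)(k-1)$. Next I would check that these load values are pairwise distinct: the map $j \mapsto (h-j)(k-1)$ is strictly decreasing for $k \ge 2$, so the events $\{\,\textrm{load} = v_{h,k}(j)\,\}$ are mutually exclusive and are the only ones with positive probability, which is what the ``otherwise: $0$'' branch of (\ref{eq:repli-dist}) asserts. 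Then I would verify exhaustiveness by summing the frequencies from Theorem~\ref{thm:repli-freq}: $\sum_{j=0}^{h} f_{h,k}(v_{h,k}(j)) = 1 + (k-1)\sum_{j=1}^{h} k^{j-1} = 1 + (k^{h}-1) = k^{h}$, which confirms that dividing by $k^{h}$ yields a genuine probability distribution.

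Finally, substituting Theorem~\ref{thm:repli-freq} into $P_{h,k}(v_{h,k}(j)) = f_{h,k}(v_{h,k}(j))/k^{h}$ gives $k^{-h}$ for $j=0$ and $k^{j-1}(k-1)/k^{h} = k^{j-h-1}(k-1)$ for $1 \le j \le h$, which is exactly (\ref{eq:repli-dist}). The argument poses no real obstacle; the only step that needs a moment of care is the disjointness/exhaustiveness check of the preceding paragraph, since it is what guarantees that the stated $P_{h,k}$ sums to one and hence legitimately deserves to be called the probability distribution of the replication load.
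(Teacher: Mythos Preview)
Your argument is correct and follows precisely the paper's approach: normalise the frequencies of Theorem~\ref{thm:repli-freq} by the number of leaves $k^{h}$, justified by the fact that every leaf is reached exactly once. Your additional checks that the values $v_{h,k}(j)$ are pairwise distinct and that the frequencies sum to $k^{h}$ merely make explicit what the paper leaves implicit.
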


\begin{corollary}
The average replication load for a node in a full prefix tree $T_{h,k}$ is
given by $1 + {\cal O}(k^{-h})$, its standard deviation by $ \sqrt{k} + {\cal O}(k^{-h})$.
\label{cor:repli-avg}
\end{corollary}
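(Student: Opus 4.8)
The plan is to extract both moments directly from the explicit law $P_{h,k}$ of the replication load $V$ given in Corollary~\ref{cor:repli-dist}, where $V$ takes the value $v_{h,k}(j)=(h-j)(k-1)$ with probability $P_{h,k}(v_{h,k}(j))$. As a warm-up (and a consistency check) I would first confirm $\sum_{j=0}^{h}P_{h,k}(v_{h,k}(j))=1$, which amounts to the geometric identity $\sum_{j=1}^{h}(k-1)k^{j-h-1}=1-k^{-h}$.

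Next I would compute $E[V]=\sum_{j=0}^{h}v_{h,k}(j)\,P_{h,k}(v_{h,k}(j))$. Peeling off the $j=0$ term gives the exponentially small contribution $h(k-1)k^{-h}$, and in the remaining sum the substitution $m=h-j$ produces $\frac{(k-1)^2}{k}\sum_{m=0}^{h-1}m\,k^{-m}$. The crucial point is that $\sum_{m=0}^{\infty}m\,x^{m}=x/(1-x)^2$ evaluated at $x=1/k$ equals $k/(k-1)^2$, so the prefactor collapses exactly to $1$; the discarded tail $\sum_{m\ge h}m\,k^{-m}$ is of order $h\,k^{-h}$, hence $E[V]=1+{\cal O}(k^{-h})$.

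For the variance I would repeat the same manipulation on $E[V^2]=\sum_{j}v_{h,k}(j)^2\,P_{h,k}(v_{h,k}(j))$, this time using $\sum_{m=0}^{\infty}m^2 x^{m}=x(1+x)/(1-x)^3$, which at $x=1/k$ equals $k(k+1)/(k-1)^3$ and makes the prefactor $\frac{(k-1)^3}{k}$ collapse to $k+1$ (again modulo an exponentially small tail and the negligible $j=0$ term $h^2(k-1)^2k^{-h}$). Then $\mathrm{Var}(V)=E[V^2]-(E[V])^2=(k+1)-1+{\cal O}(k^{-h})=k+{\cal O}(k^{-h})$, and applying $\sqrt{k+\varepsilon}=\sqrt{k}+{\cal O}(\varepsilon)$ yields the standard deviation $\sqrt{k}+{\cal O}(k^{-h})$.

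The one point that requires care — and the only thing beyond routine series evaluation — is the accounting of the error terms: the corrections actually carry polynomial-in-$h$ prefactors ($\sim h\,k^{-h}$ for the mean, $\sim h^2 k^{-h}$ for the second moment), so the ${\cal O}(k^{-h})$ in the statement must be read as absorbing these polynomial factors into the exponential decay (legitimate because $h^{c}k^{-h}={\cal O}(k^{-(1-\delta)h})$ for every $\delta>0$, or simply because $h$ is treated as fixed while the tree is dense). With that convention the result follows immediately from the two standard power-series evaluations above.
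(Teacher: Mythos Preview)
The paper states this corollary without proof, so there is no argument to compare against; your derivation from the explicit law $P_{h,k}$ of Corollary~\ref{cor:repli-dist} via the standard power-series identities $\sum m x^m = x/(1-x)^2$ and $\sum m^2 x^m = x(1+x)/(1-x)^3$ is exactly the intended, routine computation, and it is correct.

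One minor sharpening: for the mean you do not actually need the series at all, nor do you incur a polynomial prefactor in the error. Since every node except the source receives exactly one packet and there are no duplicates (Theorem~\ref{thm:bcastduplications}), the total number of replications is $k^h-1$, whence $E[V]=(k^h-1)/k^h = 1-k^{-h}$ \emph{exactly}. Your $j=0$ contribution $h(k-1)k^{-h}$ and the discarded series tail cancel down to $-k^{-h}$. The second moment, however, genuinely requires the series evaluation you give, and there the correction does carry an $h^2$ prefactor; your remark that the ${\cal O}(k^{-h})$ in the statement should be read as absorbing such polynomial factors is well taken.
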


Observing the weak dependence of the replication load distribution on $h$
and $k$, i.e., the tree shaping parameters, it can be assumed that the
model is sufficiently general to grant insights into the qualitative
replication behavior of a sparsely populated $k$-ary trees. We will see in section \ref{sec:bcast-simresults} that the simulations support this assumption.

\SubSection{Hop Count}
As for the replication load, we firstly derive general measures of the
number of hops a packet travels from the source to any destination in the
prefix flooding scheme. 

\begin{theorem}
Any overlay node in a structured broadcast domain of $N$ receivers and an
alphabet with $k \geq 2$ digits  will receive a packet from prefix flooding
after at most $\log_{2}(N)$ hops. In the presence of Pastry overlay
routing, the number of hops attained on average equals $\log_{2^b}(N)$ with
$k = 2^b$.
\label{thm:hop-limit}
\end{theorem}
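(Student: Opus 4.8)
The plan is to handle the two assertions separately: a deterministic bound on the idealized full prefix tree, and the average behaviour under Pastry's key-based routing.

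For the deterministic part I would read the hop dynamics off the \proc{Pastry Prefix Flooding} pseudo code. A packet leaves the source carrying a destination prefix of length $D=0$, and at every forwarding step the new length is set to a row index $i$ with $i\ge D+1$; hence the destination prefix --- equivalently, the common prefix shared with the eventual receiver --- strictly lengthens by at least one per hop. In a fully populated $k$-ary prefix tree of height $h$ (so that $N=k^{h}$, the number of leaves), a receiver is reached exactly when the destination prefix attains the full key length $h$, because only then is the prefix-to-node mapping unique, and Theorem~\ref{thm:bcastcoverage} guarantees that this eventually happens. Starting from a length of at most $h$ and gaining at least one digit per hop, the packet therefore arrives after at most $h$ hops; since $h=\log_{k}N$ and $k\ge 2$ we have $h\le\log_{2}N$, with equality for $k=2$, which gives the claimed worst case.

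For the average part I would observe that the sequence of nodes forwarding a packet toward a fixed receiver $t$ in \proc{Pastry Prefix Flooding} is an ordinary Pastry route toward the key $t$: the source passes the packet to the routing-table entry in the row matching the first digit in which $t$ and the source disagree, that node forwards it along the entry matching the next disagreeing digit, and so on, each hop resolving at least one further digit of $t$'s identifier. The expected length of such a route is precisely what the standard Pastry analysis \cite{rd-psdol-01} establishes: for $N$ identifiers drawn uniformly over a $2^{b}$-ary space, a prefix of length $\ell$ is shared on average by about $N2^{-b\ell}$ nodes, so roughly $\log_{2^{b}}N$ digits must be pinned down before $t$ is isolated, and each hop pins down at least one, yielding an expected hop count of $\log_{2^{b}}N=\log_{k}N$. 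I would invoke that result rather than redo its probabilistic details.

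The step I expect to be the main obstacle is reconciling the clean phrasing ``at most $\log_{2}N$'' with the fact that a genuinely deployment-independent worst-case bound needs the full/balanced-tree hypothesis --- equivalently, that the key length is taken to equal $\log_{k}N$ --- since an adversarially skewed population can route a packet through more than $\log_{2}N$ hops. I would therefore state the full-tree assumption explicitly, consistently with the analytical model of this section, and leave the behaviour on sparse, non-uniformly populated overlays to the simulations of Section~\ref{sec:bcast-simresults}, where the $\log_{2^{b}}N$ average is confirmed.
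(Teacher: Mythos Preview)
The paper does not actually prove Theorem~\ref{thm:hop-limit}; like the companion bound in Theorem~\ref{thm:repli-limit}, it is stated as an immediate observation and the text moves directly on to the fully populated tree analysis. There is thus no paper proof to compare against.

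Your argument is nonetheless the one the authors evidently have in mind. The worst-case part is exactly the intended reasoning: each forwarding step strictly lengthens the destination prefix, so the hop count is bounded by the tree height $h$, and in the full $k$-ary model $h=\log_k N\le\log_2 N$ for $k\ge 2$. The average part is likewise just an appeal to Pastry's standard $\log_{2^b}N$ routing bound~\cite{rd-psdol-01}, which is precisely how the paper treats it (and later cross-checks in Section~\ref{sec:bcast-simresults}). Your explicit flag that the clean $\log_2 N$ worst case needs the balanced/full-tree hypothesis is appropriate and consistent with the paper, whose entire analytical section works under that assumption.

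One minor imprecision worth tightening: a receiver is not reached ``exactly when the destination prefix attains the full key length $h$''; it is reached as soon as it happens to be the representative stored for \emph{some} shorter prefix in a forwarder's neighbor set. This only makes your upper bound looser, not wrong, but the phrasing as stated overclaims.
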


We now want to return to considering a fully populated prefix tree and
derive the hop distribution thereof. The main idea is similar to the
replication load: A forwarding peer sends the broadcast to $k-1$ prefix
neighbors, all of them rooting an equally structured subtree of height $h-1$.
We are counting the number of paths with a length reduced by one herein.
Additionally we count the frequency of paths for the calculated hop count
in the virtual subtree containing the forwarder. This recursion results in 

\begin{theorem}
Given a fully populated $k$-ary prefix tree of height h, the frequency
$f_{h,k} (j)$ of a hop count $j$ occurring in prefix flooding is given by
\begin{equation}
f_{h,k} (j) = { h \choose j } (k-1)^j.
\label{eqn:hop-freq-full}
\end{equation}
\label{thm:hop-dist-full}
\end{theorem}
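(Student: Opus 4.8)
The plan is to reuse the self-similar decomposition of a full $k$-ary prefix tree that already carried the replication-load analysis (cf. figure~\ref{fig:ktree-prop}) and to turn it into a recursion in the height $h$ that collapses to Pascal's rule. Write $T_{h,k}$ for the full $k$-ary prefix tree of height $h$, fix an arbitrary leaf as the broadcast source, and recall that the root of $T_{h,k}$ has $k$ children, each rooting a copy of $T_{h-1,k}$; let $T_{0}$ be the child subtree containing the source and $T_{1},\dots,T_{k-1}$ the $k-1$ sibling subtrees. Since a full $k$-ary tree is homogeneous under permutation of the children at every inner vertex, the hop-count histogram produced by \proc{Prefix Flooding} depends only on the height of the tree, not on the particular source leaf, so $f_{h,k}$ is well defined.

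First I would trace how the flooding started at the source splits along this decomposition. Among the source's prefix neighbours, those that already differ in the first digit contribute exactly one representative per sibling subtree $T_{1},\dots,T_{k-1}$; each such packet costs one hop to enter the corresponding $T_{i}$ and then triggers a flooding that is, up to the automorphism moving that representative to the root of $T_{i}$, a fresh instance of \proc{Prefix Flooding} on $T_{h-1,k}$. All remaining prefix neighbours of the source share the first digit and hence lie in $T_{0}$, and the forwarding they generate is an instance of \proc{Prefix Flooding} on $T_{h-1,k}$ with the same source leaf and with every hop count preserved (no entering hop is incurred, since we never leave $T_{0}$). Counting the receivers reached after exactly $j$ hops on both sides yields the recursion
\begin{equation}
f_{h,k}(j)\;=\;f_{h-1,k}(j)\;+\;(k-1)\,f_{h-1,k}(j-1),
\label{eq:hop-rec}
\end{equation}
with base values $f_{0,k}(0)=1$ and $f_{0,k}(j)=0$ for $j>0$ (the height-$0$ tree is the lone source, reached after $0$ hops).

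The theorem then drops out by induction on $h$: substituting the hypothesis $f_{h-1,k}(j)={h-1 \choose j}(k-1)^{j}$ into~(\ref{eq:hop-rec}) gives $f_{h,k}(j)=(k-1)^{j}\bigl({h-1 \choose j}+{h-1 \choose j-1}\bigr)={h \choose j}(k-1)^{j}$ by Pascal's rule, while the boundary cases $j=0$ and $j=h$ are immediate. As a sanity check one verifies $\sum_{j\ge 0}f_{h,k}(j)=\bigl(1+(k-1)\bigr)^{h}=k^{h}$, matching the number of leaves of $T_{h,k}$, i.e. the source together with all other overlay nodes, each reached exactly once by Theorems~\ref{thm:bcastcoverage} and~\ref{thm:bcastduplications}.

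The main obstacle I expect is not the algebra but the bookkeeping behind~(\ref{eq:hop-rec}): one has to argue that the flooding confined to each height-$(h-1)$ subtree really is an instance of \proc{Prefix Flooding} on $T_{h-1,k}$ --- independently of which leaf the underlying DHT routing table picks as the entry representative --- and that no packet ever crosses between two of these subtrees, so that the additive split neither double-counts nor omits a receiver. Both points rest on the homogeneity of full $k$-ary trees together with the uniqueness guarantee of Theorem~\ref{thm:bcastduplications}. A shorter, essentially equivalent route is direct enumeration: a receiver reached after $j$ hops corresponds bijectively to a strictly increasing selection of $j$ of the $h$ branching levels at which the destination prefix is extended, together with one of $k-1$ admissible digit choices at each, which gives ${h \choose j}(k-1)^{j}$ at once; this still needs the same homogeneity argument to see that the intermediate filler digits do not affect the count.
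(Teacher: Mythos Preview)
Your proposal is correct and follows essentially the same route as the paper: both derive the recurrence $f_{h,k}(j)=f_{h-1,k}(j)+(k-1)\,f_{h-1,k}(j-1)$ from the self-similar decomposition of the full $k$-ary prefix tree and then verify the closed form against it. Your write-up is more explicit about the bookkeeping (homogeneity, independence of the chosen entry leaf) and adds the direct combinatorial enumeration as an alternative, but the underlying argument is the paper's own.
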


\begin{proof}
A flooding packet arriving at node $n$ after $j$ hops will admit a current
destination prefix of length $j$. Being located in a subtree of height
$h-j$, $n$ will forward the packet to its downtree neighbors, thereby
partitioning its subtree into $k-1$ further subtrees of height $h-j-1$ (cf.
figure \ref{fig:ktree-prop}). Due to the recursive nature of the $k$-ary
prefix tree, the frequency distribution satisfies the recurrence relation
 
\begin{equation}
f_{h,k} (j) = f_{h-1,k} (j) + (k-1) \cdot  f_{h-1,k} (j-1)
\label{eqn:hop-frec-full}
\end{equation}
with initial conditions $ f_{1,k}(0)=1, f_{1,k}(1)=k-1$.\\ 
Inserting $f_{h,k}$ yields the claim.
\end{proof}

This result can be interpreted in two different ways. Among all
legitimate paths in downtree routing, i.e., of length $h$, those of length
$j$ are selected and branch $k-1$ times at each of the $j$ intermediate
prefix nodes. Alternatively, flooding corresponds to a node discovery process,
where a node discovers its $v_{h,k}(j) = (h-j)(k-1)$ neighbors which in
turn discover their neighbors in the following step. Subsequent neighbor
discovery requires connect to the $j$-th part as only $(h-j)(k-1)/j$ nodes
have further neighbors.

Following a similar argument as in corollary \ref{cor:repli-dist}, it is
clear that normalization for hop count frequencies is given by $k^{h}$, the
number of leaf nodes in the full prefix tree. 

\begin{corollary}
The probability distribution $H_{h,k} (j)$ of the hop count for flooding a
full prefix tree $T_{h,k}$ evaluates to
\begin{equation}
H_{h,k} (j) = k^{-h} \cdot { h \choose j } (k-1)^j.
\label{eq:hop-dist-full}
\end{equation}
\label{cor:hop-dist-full}
\end{corollary}

\begin{corollary}
The average hop count at which a packet is received from flooding in a full
prefix tree $T_{h,k}$ is given by $<H_{h,k}> = (k-1)/k \cdot h$,  the standard deviation of the hop count distribution (\ref{eq:hop-dist-full}) equals $\sigma_{H_{h,k}} = \sqrt{(k-1) \cdot h} / k$.
\label{cor:hop-avg-full}
\end{corollary}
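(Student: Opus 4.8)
The plan is to observe that the hop count distribution $H_{h,k}(j) = k^{-h} {h \choose j} (k-1)^j$ of Corollary~\ref{cor:hop-dist-full} is a binomial distribution in disguise. Indeed, $k^{-h}(k-1)^j = \left(\frac{k-1}{k}\right)^{j}\left(\frac{1}{k}\right)^{h-j}$, so $H_{h,k}$ is precisely the probability mass function of a binomially distributed variable with $h$ trials, success probability $p = (k-1)/k$ and failure probability $q = 1-p = 1/k$. (Normalization $\sum_{j=0}^{h} H_{h,k}(j)=1$ is then just the binomial theorem, matching the remark that precedes the corollary.) Once this is recognised, the two assertions are the classical moment formulas for the binomial law: the mean equals $hp = (k-1)h/k$, which is $\langle H_{h,k}\rangle$, and the variance equals $hpq = (k-1)h/k^2$, whose square root is $\sqrt{(k-1)h}\,/\,k = \sigma_{H_{h,k}}$.

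If one prefers a self-contained derivation that does not quote these formulas, I would compute the factorial moments by hand. From $j{h \choose j} = h {h-1 \choose j-1}$ together with the binomial theorem one gets $\langle H_{h,k}\rangle = k^{-h}\, h(k-1)\, k^{h-1} = (k-1)h/k$; from $j(j-1){h \choose j} = h(h-1){h-2 \choose j-2}$ one gets $\langle j(j-1)\rangle = h(h-1)(k-1)^2/k^2$; and then $\mathrm{Var}(H_{h,k}) = \langle j(j-1)\rangle + \langle H_{h,k}\rangle - \langle H_{h,k}\rangle^2$ collapses to $(k-1)h/k^2$, giving the claimed standard deviation. Equivalently, one may read everything off the probability generating function $G(z) = k^{-h}(1+(k-1)z)^h$ by evaluating $G'(1)$ and $G''(1)$.

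I do not expect a genuine obstacle in this corollary: the distribution is \emph{exactly} binomial, so both the mean and the standard deviation follow immediately. The only spot that calls for a little care is the second-moment bookkeeping in the self-contained version --- keeping the $h(h-1)$ and the $h$ contributions and the two index shifts straight --- but that is routine algebra and can in any case be bypassed entirely by invoking the standard binomial moments.
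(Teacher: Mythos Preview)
Your proposal is correct. The paper states this corollary without proof, treating it as an immediate consequence of the explicit distribution in Corollary~\ref{cor:hop-dist-full}; your recognition that $H_{h,k}$ is exactly the binomial law with parameters $h$ and $p=(k-1)/k$ is precisely the observation that makes it immediate, and your self-contained factorial-moment computation is a sound alternative.
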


This average is almost independent of the prefix alphabet $k$ and can be in
some sense interpreted as the counterpart of the average replication load
as seen in corollary \ref{cor:repli-avg}. As the average number of per hop
replications is close to one, packets travel down the entire tree and reach
most of their receivers after nearly $h$ hops. The width of the hop count distribution, its standard deviation, admits a weak dependence on $k$, slowly decaying from its maximum at $k = 2$ as $k^{-1/2}$. 

In contrast to the replication load distribution, which showed only a weak
dependence on the tree shaping parameters, the hop count results strongly
depend on $h$ for the fully populated $k$-ary tree. The height $h$ is directly
related to the number of nodes $k^h$ in this tree, which does not hold for
realistic scenarios. Thus  a direct transfer to sparsely populated random
trees is questionable. 

To derive a distribution for general distribution trees, evaluations are
required on the class of all {\em random $k$-ary} trees. Unfortunately,
this turns out to be difficult. Proceeding in a significantly simpler, but
reasonable approach, we restrict the analysis to the class of {\em random recursive $k$-ary} trees
with a {\em homogeneous} probability $p$ for independent edges. In this
model, each vertex branches to each of its $k-1$ possible outdegrees independently
with probability $p$, thereby preserving the recursive nature of the fully
populated $k$-ary tree. Instead of equation \ref{eqn:hop-frec-full}, the hop
frequency of routing on this random recursive tree will be governed by the
modified rate equation

\begin{eqnarray}
f_{h,k} (j) & = & f_{h-1,k} (j) + p \cdot (k-1) \cdot  f_{h-1,k} (j-1) \ ~~~~~ \\ 
& & \mbox{with } f_{1,k}(0) = 1, f_{1,k}(1) = p(k-1).\nonumber
\label{eqn:hop-frec-p}
\end{eqnarray}

This can be solved analogously to \ref{eqn:hop-frec-full} and yields

\begin{corollary}
The probability distribution $H_{h,k}^{(p)} (j)$ of the hop count for
flooding a random recursive $k$-ary prefix tree $T_{h,k}^{(p)}$ with
homogeneous, independent edge probability $p$ evaluates to
\begin{equation}
H_{h,k}^{(p)} (j) = (1 + p(k-1))^{-h} \cdot { h \choose j } \cdot (p(k-1))^j,
\label{eq:hop-dist-p}
\end{equation}
which attains the average value $<H_{h,k}^{(p)}>~= \frac{p(k-1)}{1 + p(k-1)} \cdot h$, and the  standard deviation  $\sigma_{H_{h,k}^{(p)}}~=~\frac{\sqrt{p(k-1)\cdot h}}{1 + p(k-1)} $.
\label{cor:hop-dist-p}
\end{corollary}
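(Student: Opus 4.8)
The plan is to follow the derivation of Corollary~\ref{cor:hop-dist-full} almost verbatim, exploiting the fact that the rate equation (\ref{eqn:hop-frec-p}) is obtained from the full-tree recurrence (\ref{eqn:hop-frec-full}) by the single substitution $k-1 \mapsto p(k-1)$. First I would solve (\ref{eqn:hop-frec-p}) for $f_{h,k}(j)$ by induction on $h$. With the ansatz $f_{h,k}(j) = {h \choose j}(p(k-1))^j$, the base case $h=1$ reproduces exactly the stated initial conditions $f_{1,k}(0)=1$ and $f_{1,k}(1)=p(k-1)$; the induction step substitutes the hypothesis into $f_{h,k}(j) = f_{h-1,k}(j) + p(k-1)\,f_{h-1,k}(j-1)$, which collapses to ${h-1 \choose j}(p(k-1))^j + {h-1 \choose j-1}(p(k-1))^j = {h \choose j}(p(k-1))^j$ by Pascal's rule.

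Next I would fix the normalization. Exactly as in Corollary~\ref{cor:hop-dist-full}, the total path count follows from the binomial theorem, $\sum_{j=0}^{h} f_{h,k}(j) = \sum_{j=0}^{h}{h \choose j}(p(k-1))^j = (1+p(k-1))^h$, so dividing through yields $H_{h,k}^{(p)}(j) = (1+p(k-1))^{-h}{h \choose j}(p(k-1))^j$, the claimed formula.

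The clean observation that makes the two moments immediate is that this $H_{h,k}^{(p)}$ is precisely a binomial distribution $\mathrm{Bin}(h,q)$ with success probability $q = \frac{p(k-1)}{1+p(k-1)}$: indeed $q^{\,j}(1-q)^{\,h-j} = (p(k-1))^j\,(1+p(k-1))^{-h}$. Hence the mean equals $h q = \frac{p(k-1)}{1+p(k-1)}\,h$ and the variance equals $h q(1-q) = h\,\frac{p(k-1)}{(1+p(k-1))^2}$, giving the standard deviation $\frac{\sqrt{p(k-1)\,h}}{1+p(k-1)}$. If one prefers to avoid invoking the binomial distribution, the same numbers fall out of the probability generating function $G(z) = \bigl(\frac{1+p(k-1)z}{1+p(k-1)}\bigr)^{h}$ via $\langle H_{h,k}^{(p)}\rangle = G'(1)$ and $\mathrm{Var} = G''(1) + G'(1) - G'(1)^2$; setting $p=1$ recovers Corollary~\ref{cor:hop-avg-full} as a consistency check.

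I expect the only real subtlety to be bookkeeping at the boundaries of the recurrence — the $f_{h-1,k}(j-1)$ term must be read as zero for $j=0$ and the $f_{h-1,k}(j)$ term as zero for $j>h-1$ — together with making explicit why the homogeneous, independent edge probability $p$ is exactly what lets each branching contribution be scaled by $p$ while leaving the self-similar recursion of Theorem~\ref{thm:hop-dist-full} otherwise intact. No deeper obstacle arises: because the $k-1$ potential downtree edges at each vertex are present independently with the common probability $p$, the expected per-level branch multiplicity is simply $p(k-1)$, and the argument of Theorem~\ref{thm:hop-dist-full} carries over unchanged.
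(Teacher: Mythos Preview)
Your proposal is correct and follows exactly the route the paper indicates: it solves the modified recurrence (\ref{eqn:hop-frec-p}) ``analogously to (\ref{eqn:hop-frec-full})'' via the substitution $k-1 \mapsto p(k-1)$, normalizes by the binomial sum, and reads off the moments. The paper gives no further detail beyond that one sentence, so your write-up is simply a fleshed-out version of the same argument, with the helpful extra observation that $H_{h,k}^{(p)}$ is $\mathrm{Bin}\bigl(h,\tfrac{p(k-1)}{1+p(k-1)}\bigr)$.
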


The introduced edge probability $p$ is not a 'free' parameter, but a
function of the number of leaf nodes $N = (1 + p(k-1))^{h} $ in the tree.
Solving this relation for 
$ p = \frac{\sqrt[h]{N}-1}{k-1},$
and inserting typical Pastry parameters for $k= 16$, $h = 128$ and node
numbers of our simulations, will lead to the relatively small edge
probabilities, mean hop counts and standard deviations displayed in table \ref{tab:link-prob}.

\begin{table}[h]
\center
\begin{tabular}{|c||c|c|c|c|}
\hline
 &  \multicolumn{4}{|c|}{$k= 16, \ h=128$}\\
\hline
 $N$ &  10 & 100 & 1.000 & 10.000 \\
\hline
\hline
$ p$ &  0.00122 & 0.00244& 0.00370 & 0.00497\\
\hline
$<H_{h,k}^{(p)}>$ &  2.30 & 4.52 & 6.73 & 8.88 \\
\hline
$\sigma_{H_{h,k}^{(p)}} $&  1.50 & 2.09 & 2.53 & 2.87\\
\hline
\end{tabular} 
\caption{Selected Link Probabilities,  Mean Hop Counts and Standard Deviations.}
\label{tab:link-prob}
\vspace{-0.5cm}
\end{table}

These analytical results will not only support a qualitative insight into the mechanisms of prefix-based packet distribution, but will also show significant agreement with the simulation results presented in the subsequent section. 

\Section{Simulation Results}\label{sec:bcast-simresults}

In this section, we will analyze the performance of the prefix flooding
based on a stochastic discrete event simulation and compare to the behavior
of the rendezvous point-based approach Scribe.  Both, the prefix flooding
and Scribe, are implemented on top of a proactive version of the DHT substrate
Pastry. 

In detail, our simulations are performed on the network simulator platform
OMNeT++ 3.3 \cite{omnetpp}, supplemented by a preliminary version of the
overlay simulation package OverSim \cite{oversim} including Scribe and
extended by the prefix flooding implementation. Pastry has been configured
as in its original version \cite{rd-psdol-01}. Especially, we use a key
length of $128$ and an alphabet size of $16$, if not mentioned otherwise.
To investigate the scaling behavior of the protocols, the simulations are
conducted for a number of peers varying by three orders of magnitude.
None of the relative metrics described in section
\ref{sec:bcast-performance} depend on the underlay. Thus the Simple model
\cite{bhk-ofons-07} has been applied as the underlying network with a
homogeneous link delay of $1\,ms$ to analyze the network properties inside
the overlay. 

The analysis is not focusing on reliability aspects, which allows us to
neglect churn. In particular, any effects of volatile nodes would be
completely maintained by Pastry for the prefix flooding and partially for
Scribe. Rendezvous point (RP) based schemes have to reorganize the
distribution tree due to failing RPs, resulting in DHTs by new key
associations, which nevertheless is not addressed here.

Summarizing the simulation scenario, we calculate the flooding performance
on an arbitrary ($k=16$)-ary prefix tree with a fixed maximal height and a
varying number of leaves interconnected by links of identical weight. The
broadcast will be initiated by a randomly selected leaf.

\SubSection{Replication Load}
\begin{figure*}
  \center
  \subfigure[Prefix Flooding, $k=16$]{\includegraphics[width=0.44\textwidth]{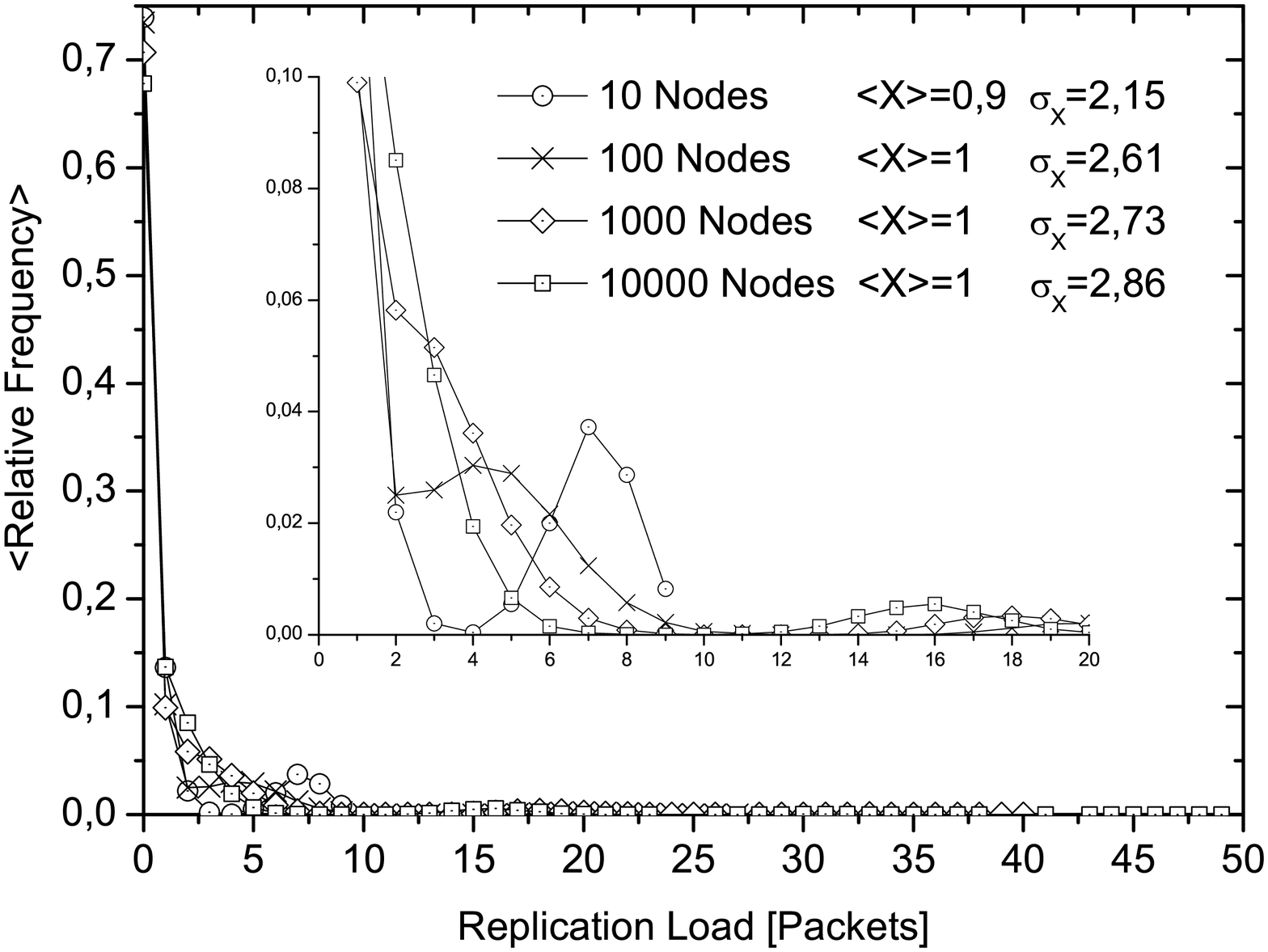}\label{fig:bcast-repliflooding}}
  \subfigure[Scribe, $k=16$]{\includegraphics[width=0.44\textwidth]{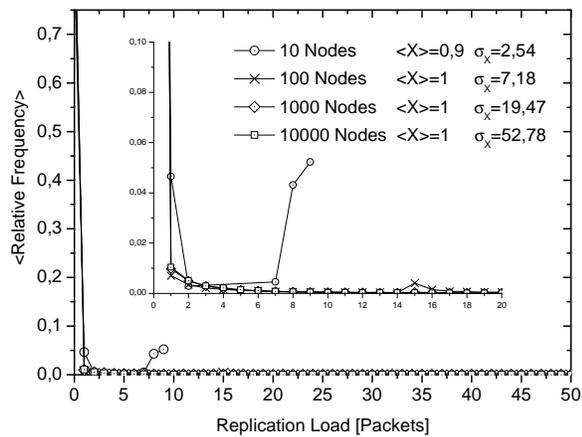}\label{fig:bcast-repliscribe}}
  \subfigure[Detail: Tail for Prefix Flooding, $k=16$]{\includegraphics[width=0.44\textwidth]{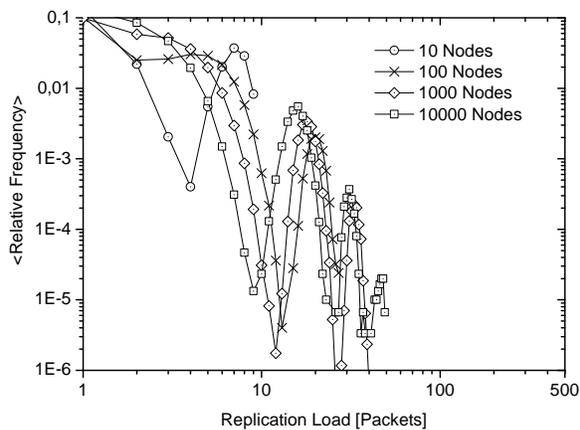}\label{fig:bcast-replifloodingtail}}
  \subfigure[Detail: Tail for Scribe, $k=16$]{\includegraphics[width=0.44\textwidth]{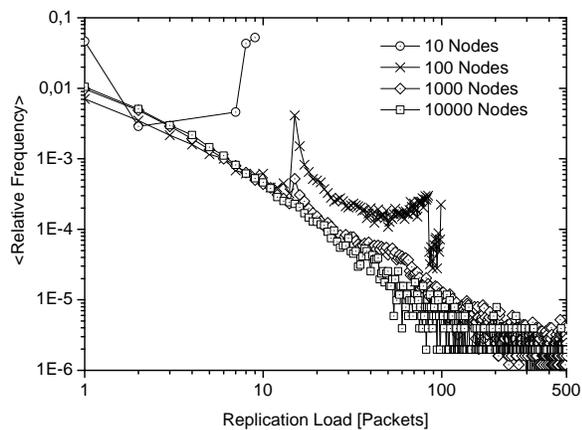}\label{fig:bcast-repliscribetail}}
  \caption{Distribution of Packet Replication Comparing Prefix Flooding
  with Scribe}\label{fig:bcast-repli}
\end{figure*}

\begin{figure*}
  \center
  \subfigure[$N=100$]{\includegraphics[width=0.44\textwidth]{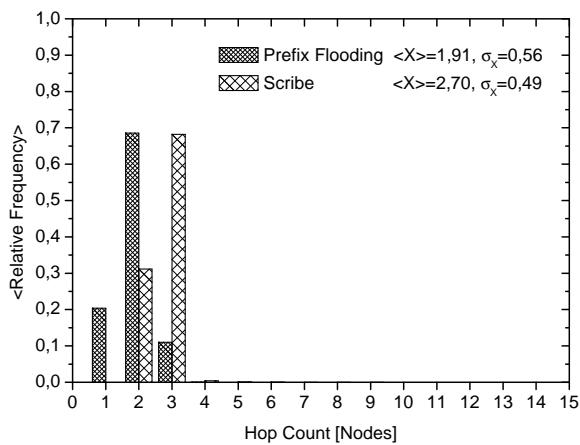}}
  \subfigure[$N=10000$]{\includegraphics[width=0.44\textwidth]{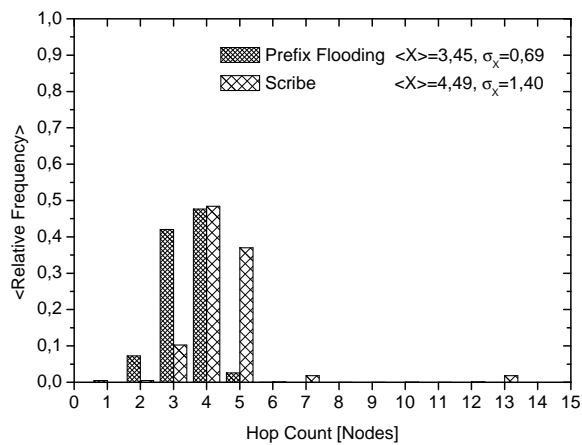}}
  \caption{Hop Count Distribution for an Overlay of Size
  $N$}\label{fig:bcast-hop count}
\end{figure*}

The distributions of the peer replication load for prefix flooding and Scribe
are displayed in figure \ref{fig:bcast-repli}. Both schemes show an exponential decay around their common average value of 1. However, the shapes of the distributions for the two approaches vary significantly, which becomes apparent at first from standard deviation values. While the widths of the distributions for prefix flooding are small und almost independent of network sizes, the corresponding values for Scribe grow large, about linearly in the number of nodes.

Both broadcasting schemes produce a large number of replications of values $0$ and
$1$, but frequencies drastically drop for higher multiplicities. Prefix flooding distribution attains a much smoother decay, leaving significant probability to replication values of $2-10$. Smoothness is even more pronounced for smaller alphabets, which for space restrictions are omitted here. 
 In contrast, Scribe decreases faster from its
average, decaying rapidly to probabilities below $1/100$ for replications larger than 2, fairly independent of the alphabet $k$. 

An exception from this overall shape can be observed for the distribution of 10 peers in Scribe. Here, the frequencies of replication values around $9$ are strongly enhanced. This border effect for  very small networks can be understood from analyzing distribution tails. As visualized in the log-log plot \ref{fig:bcast-repliscribetail}, the distribution of Scribe is heavy-tailed according to a power law decay, representing remarkably high probabilities for very large replication values up to $7800$. Corresponding probabilities are  accumulated for small sized overlays.

In contrast, the prefix flooding distribution admits a strict exponential
decay, with tail weights vanishing above 50.  Replication values in prefix
flooding are superimposed by oscillating frequencies as visible in figure
\ref{fig:bcast-replifloodingtail}. The resulting probability ``bumps'' are
noticeable on different scales for all overlays and can be explained by our
theoretical analysis, which reveals an exponential decay within the range
of multiples of $(k-1)$. Compared to the prerequisites  of corollary
\ref{cor:repli-dist}, the simulated overlays do not operate on full $k$-ary
prefix trees. Hence replication values do not only occur as multiples of
the branching factor, but level out with neighboring values. Nevertheless,
regarding the peaks of the bumps, the population and replication pattern of
the $k$-ary trees remain clearly visible. 

In both approaches, most of the peers receive the broadcast without a need to
forward it further. Scribe thereby stresses a small number of peers to serve a
much higher replication load. Instead, the prefix flooding reduces the maximal replication load by distributing the load  evenly over the neighbors. 

\SubSection{Hop Count}

The mean hop count distribution for different overlay sizes is shown in
figure \ref{fig:bcast-hop count}. In general, both schemes show the
logarithmically growing hop path length dependent on the number of peers.
With an increasing quantity of leaves, the height of prefix trees will
increase logarithmically, as well, resulting in longer paths from the
source and intermediate forwarders to the receivers. The mean hop count $<X>$ for
Scribe highlights approximately one additional node in contrast to the
prefix flooding.

For a sufficiently large $N>10$, the average of the distribution for the
prefix flooding attains directly the calculated mean hop count in theorem
\ref{thm:hop-limit}, at which all other hop count values are centered. The
hop count distribution in Scribe shows a heavy-tailed behavior, which
increases with the overlay size as indicated by the approximate linear
growth of the standard deviation. In contrast, the prefix flooding almost
attains a constant variation. Consequently, in prefix flooding the path
lengths are tightly concentrated around the logarithmically bounded
average, while Scribe builds up longer branches with higher weights.

\SubSection{Relative Delay Penalty}
Figure \ref{fig:bcast-travelpenalty} shows the 
relative delay penalty (RDP) as function of the network size for Scribe over prefix
flooding. Scribe packets travel about a factor of $1.4$ slower  than data of prefix flooding in larger networks. The enhanced delay penalty in small networks of about $10$ peers reflects the observations of figure \ref{fig:bcast-repliscribe}   that  almost all receivers are addressed directly by  the rendezvous point, which replicates  the full number of overlay nodes. The more keys are allocated, the more
branching points are located close to the RP resulting in longer paths
and less efficient parallelism in Scribe, which is in contrast to the prefix flooding.

\begin{figure}
  \center
 \includegraphics[width=\columnwidth]{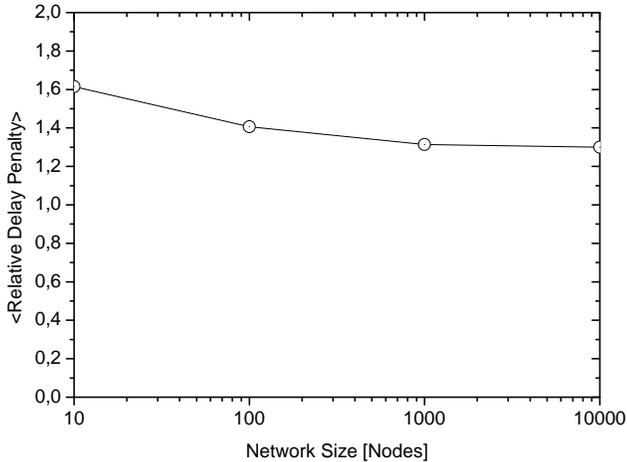}
  \caption{RDP for Scribe over Prefix Flooding}\label{fig:bcast-travelpenalty}
  \vspace*{-0.5cm}
\end{figure}

\Section{Related Work}\label{sec:rel-work}

The principal approach for implementing broadcast on a pure DHT derives from
recursive partitioning of the key space with data distribution following partition
ranges. The prefix flooding operates in this sense, defining numerical interval
boundaries from prefix transitions. The first idea of a broadcast based on
nested intervals was proposed in \cite{eabh-ebspn-03}. The broadcast is sent to intervals of exponentially increasing scale as derived from the Chord routing table. 

A  generalization  of \cite{eabh-ebspn-03} is proposed in
\cite{gaebh-sbdht-03}. In addition to a design independent of Chord, the authors
enhance their algorithm by reliability routines, which guarantee a
broadcast distribution independent of the routing table states. This is performed by
delegating data delivery for missing entries to subsequent forwarders.

The authors in \cite{lczll-ebadh-07} introduce a scheme, which splits the
key space in $d$ partitions of equal size and selects the first node in
clockwise direction as the responsible forwarder. Otherwise similar to
\cite{eabh-ebspn-03}, this approach refrains from  using uneven, logarithmic partitioning. 

An approach, which cannot ensure a broadcast distribution without data
redundancy, is presented in \cite{mg-ebpg-05}. The authors combine a
slightly enhanced version of \cite{eabh-ebspn-03} with an epidemic
distribution. All broadcast forwarders send the data periodically to
a randomly chosen neighbor, whereby the protocol may duplicate broadcast to the same 
neighbor. 
All of the approaches mentioned above lack formal verification, as well
as  analytical considerations regarding data distribution in 
$k$-ary prefix trees. Most of the algorithms are implemented on top of Chord, none of them on Pastry, which natively offers a proximity-aware prefix routing.

A generalized construction scheme to partitioning the key is space is
presented in \cite{lsl-iabdh-05}. The authors observe that any contractive
self-mapping function $P$ of the key space with a single fixed point
$\alpha$, i.e., $P(\alpha)=\alpha$, gives rise to a parent relationship.
Based on the parent relation $P(\alpha)$, a reverse path can be set up for
any node $\alpha$, leading to a broadcast distribution tree with the root
$\alpha$.  Different parent functions thus give rise to different trees at
variable roots, which may be used for load-sharing or redundancy purposes.



DHT specific flooding has been introduced in the early work
\cite{rhks-amucn-01} for CAN (Content Addressable Network). In contrast to
Chord or Pastry, CAN maps node IDs to regions representing coordinates in a
partitioned $d$-dimensional space. CAN broadcasts the data to all
geographical neighbors, thereby accounting for predecessors and foreseeable
redundancies. However, the partitioning of the $d$-dimensional space may be
uneven and result in data duplication at sub-regions. Performance
properties of multicast on CAN are derived analytically in
\cite{ws-kbrbm-09}. An extensive simulation study of flooding and tree
based overlay multicast over CAN and Pastry with respect to the underlay is
presented in \cite{cjkrt-esalm-03}.  The authors show that CAN flooding is
outperformed by Pastry flooding, which relies on a more efficient tree
structure adaptive to the underlay.

Our implementation of the generalized prefix flooding is similar to the
Pastry flooding of Castro \emph{et al.} \cite{cjkrt-esalm-03}. The main
difference lies in the reactive routing maintenance, which may result in
data redundancy at the fallback forwarder \cite{cjkrt-esalm-03}. The focus
of their analysis of broadcast distribution lies in the context of overlay
multicast. Results are only based on simulations. The measured metrics
reflect performance issues focusing on efforts imposed on the underlying
network. In this sense, our work can be understood as complementary: We
presented a general prefix flooding and investigate its inherent, structural
properties using an analytical model \emph{and} simulations.

\Section{Discussion and Conclusions} \label{sec:disc-concl}

In this work, we have presented and analyzed broadcasting within distributed
hash tables. A general prefix flooding approach, distributing data along
prefix branches directly to receivers, is compared to  a rendezvous
point-based scheme which utilizes a shared tree rooted at a predefined
anchor peer. Several phenomena of general interest could be observed.

{\bf\em Divergent Path Length Distributions:}
Our simulation results confirm the mean hop difference of one between the
prefix flooding and the rendezvous point-based approach Scribe. This
additional, triangular hop in the overlay becomes noteworthy when stretched in the
underlay and then may put stress on several links. The major advantage of the prefix flooding, though, is its quite
stable concentration of path length distribution around the average, attaining low
variations independent of the overlay size. In general, P2P networks
consist of volatile nodes. If we assume an overlay with regular churn,
i.e., session times in the range of minutes or larger, and a persistent
number of peers on average, the DHT moderately reorganizes key associations. Such
structural modifications lead to changing paths within the overlay and in the
worst case, a single arrival or departure of a node may cause a data path
to change drastically. In the prefix flooding, the path length only changes
moderately for new and existing peers due to its narrow distribution.
In contrast, the heavy-tailed overlay hop count
distribution of Scribe produces a largely inhomogeneous travel time, which
complicates synchronous applications.

{\bf\em Varying Replication Load:}
A high variation can also be identified for the packet replication in Scribe.
Similar to the prefix flooding, it is rather likely that peers forward with low
replication load. Nevertheless, in a long tail distribution  nodes are required
to replicate many more packets with values up to $7.800$ in large sized overlays
of $10.000$ peers. The distribution of packet replication is thus strongly
unbalanced, requiring very low and very high values to be served within the
same scenario. Such behavior does not only degrade the performance, but may threaten stability and  even cause conflicts with intrusion detection systems. 

In contrast to Scribe, the prefix flooding guarantees a replication load closely
balanced around its  average of about 1. It can be tuned directly
by the branching factor $k$. As we know from the theoretical analysis of
section \ref{sec:bcast-analytics}, packet replications occur as multiples of
$k-1$ in full prefix space. Decreasing $k$ adjusts the maximum number of
replications to smaller values.

{\bf\em An Overloaded Single Peer:}
The peers with extraordinarily high packet replication load in Scribe have been
identified as the rendezvous points (RP). An appropriate treatment of such service nodes becomes more important under the aspect of unbalanced packet replication, but poses a severe conceptual problem in DHTs: The placement of this entity should account for node and network capacities, but in a DHT is bound to the structural mapping of the multicast group identifier to an overlay key.
Any alternative approach, e.g., selecting the RP address  independently of the group address, will break the key space semantic with the result that an overlay node cannot derive the RP distribution address automatically.

Our prefix-guided broadcast strictly adheres to forward-directed
establishment of distribution trees. We have shown the generation of
efficient group communication structures. The presented approach is thus
particularly promising for overlay multicast services. Having sketched a
structured multicast solution operating in prefix-space \cite{ws-buode-07},
its elaboration is subject to our currently ongoing work. Further on, we
will integrate our scheme in hybrid group communication architectures
\cite{wsw-ggcns-09}.

\bibliographystyle{latex8}
\bibliography{fhtw-ipv6,mipv6,imeasurement,mmodeling,ssm,fhtw-vcoip,mcast,mmcast,dht-manet,omcast,p2p,dhtbcast,vcoip,mw-unref}

\end{document}